\newtheorem{lemma}{Lemma}
\newtheorem{theorem}[lemma]{Theorem}
\begin{document}

\author{Lyonell Boulton$^1$ \\ {\small \texttt{L.Boulton@hw.ac.uk}} \and
Maria Pilar Garcia del Moral$^2$ \\ {\small \texttt{maria.garciadelmoral@uantof.cl}} \and
Alvaro Restuccia$^{2,3}$ \\ {\small \texttt{alvaro.restuccia@uantof.cl}}}

\title{Massless ground state for a compact $SU(2)$ matrix model in 4D}

\date{17/3/15}

\maketitle

\abstract{ 
We show the existence and uniqueness of a massless supersymmetric ground state wavefunction of a $SU(2)$ matrix model in a bounded smooth domain with Dirichlet boundary conditions. This is a gauge system and we provide a new framework to  analyze the quantum spectral properties of this class of supersymmetric matrix models subject to constraints which can be generalized for arbitrary number of colors.}

\footnotetext[1]{Maxwell Institute for Mathematical Sciences and Department of Mathematics, Heriot-Watt University, Edinburgh, EH14 4AS, United Kingdom.}
\footnotetext[2]{Departamento de F\'isica, Universidad de Antofagasta, Antofagasta, Aptdo 02800, Chile.}
\footnotetext[3]{Departamento de F\'isica, Universidad Sim\'on Bol\'ivar, Valle de Sartenejas, 1080-A Caracas, Venezuela.}

\newpage

\section{Introduction} Determining the ground state of the $(0+1)$ matrix models is a longstanding open problem \cite{halpern, hoppe, dwhn}. This is a subject of interest in different areas, including: matrix models \cite{bfss}, Yang-Mills theories \cite{yi, sethi-stern, porrati} and M-theory \cite{hoppe, fh, bgmr}. In this paper we show the existence of the ground state wavefunction of a matrix
model which takes values on a compact space and is subject to the constraints associated to gauge symmetry. This model will serve as a benchmark in order to illustrate a new method to demonstrate  the existence and uniqueness of massless supersymmetric ground state wavefunctions of supersymmetric matrix models at finite temperature. 

Boundary conditions can change dramatically the spectral properties of field theories and at the same time they may imply that certain symmetries may be partially or even totally broken. Indeed,  for supersymmetric theories ($N=2$) only periodic, Dirichlet and Neumann boundary conditions preserve partial supersymmetry $N=1$, \cite{manolo}. These type of models can be of interest to test certain aspects of AdS/CFT dualities, such as  characterizing black holes at finite temperature.  Yang-Mills or QED fields on a box have been largely considered, for example in  bag-models or compact QED models \cite{abls}, to study confinement properties of hadrons, phase transitions and chiral symmetry breaking. In a similar fashion as QCD, compact QED in a strong-coupling region exhibits charge confinement as well as spontaneous chiral symmetry breaking. Both these aspects are thought to be related to the existence of monopoles. Many authors have studied this problem in the past, see for example \cite{Takahashi} and the references there in. The nature of the phase transitions in compact QED has in fact been under debate for a long time.

The deconfinement temperature of finite-temperature compact electrodynamics in $2+1$ dimensions has been shown to be insensitive to external fields \cite{ics}. Matrix models of the compact QED are likely to be relevant in the analysis of these theories, subject to a slow-mode regime. Effectively, they serve as toy models of such a phenomenon. 
From the M-theory point of view, $SU(2)$ matrix models correspond to a supermembrane (regularized via $SU(2)$) propagating
in a 4-dimensional Minkowski space-time \cite{dwhn, dwln}. The $SU(2)$ ground state wave function has been examined also in \cite{hlt2,su2, hlt}. The question of whether a unique ground state with zero energy exits for a Minkowski spacetime was not completely settled in \cite{dwhn}. 

In \cite{octonions} we addressed the existence and uniqueness of the ground state wavefunction for unconstrained models restricted to Dirichlet boundary conditions. Determining the ground state wave function from the point of view of the supermembrane is an open question which was posed when the model was originally formulated. This is expected to corresponds to an 11D wavefunction constructed in terms of the 11D supermultiplet of supergravity. The hamiltonian of the supermembrane has two independent contributions one associated to the movement of the center of mass in 11D Minkowski the space-time and a second one associated to the supermembrane excitations. The 
 existence and uniqueness of the ground state wave function of the 11D supermembrane in the case when the massive excitations are forced
to lie in a compact space with Dirichlet boundary conditions will be
analyzed elsewhere 
\cite{bgmr6}. 

In this paper we examine a particular model, the $SU(2)$ gauge supermembrane. Our goal is to draft the relevant step towards finding the ground state wave function of the regularized supermembrane on a 11D flat background\footnote{The mass operator of the regularized supermembrane in a 11D Minkowski spacetime corresponds to the $N=16$ supersymmetric matrix model.}, assuming that the center of mass of the supermembrane propagates freely in the 4D spacetime but its membrane excitations are confined to a compact space of arbitrary large radius $R$. We will show that, given a Dirichlet boundary condition, there exists a unique massless ground state wave function for the mass operator of the model.

The paper is organized as follows. In Section~2 we present the $SU(2)$ matrix model, the mass operator, the constraint associated to the local $SU(2)$ symmetry and the supersymmetric algebra. In Section~3 we formulate the Dirichlet problem. In Section~4 we prove the existence and uniqueness of the ground state. A final section is devoted to highlighting our main conclusions.


\section{The $SU(2)$ regularized matrix model} 
The model we will consider was introduced in \cite{dwhn} and it arises from the 11D mass operator by taking  all the fields to zero  $X^{mA}=\lambda_{\alpha'}^A=0$ for $m=1,\dots,7$ and $\alpha'=1,\dots,7$, except those transforming under $U(1)$ via $(Z^A,\overline{Z}^A, \lambda^A_8=\lambda^A)$. In conjunction with the light cone gauge fields $X^+$ and $X^-$, those fields describe a regularized supermembrane propagating in a 4D space-time. We will also restrict to the simplest nonabelian gauge symmetry, given by $G=SU(2)$.
The ground state wave function expressed as a superfield admits and expansion in the superfields that does not admit a factorization. The solution may have an even or odd number of odd grassmanian coordinates. The even ones have the expression
\begin{equation}\label{wf3}
\Psi=\phi_0(Z,\overline{Z})+\epsilon^{ABC}\phi^A(Z,\overline{Z})\lambda^B\lambda^C.
\end{equation}
This is constructed in terms of four functions $(\phi_0,\phi_A)$ with $A=1,2,3$.

The associated hamiltonian is 
\[H=-\nabla^2+V_B+V_F= -\frac{\partial^2}{\partial Z_A\partial\overline{Z}^A}+ V(Z, \overline{Z},\lambda)\]
with 
\[\begin{aligned}V_B &=\frac{1}{4}\epsilon^E_{AB}\epsilon_{CDE}\left(2Z^A\overline{Z}^B\overline{Z}^CZ^D\right) \qquad \text{and}\\
V_F&=\frac{1}{\sqrt{2}}\epsilon_{ABC}\left(Z^A\lambda^B\lambda^C-\overline{Z}^A\frac{\partial}{\partial\lambda_B}\frac{\partial}{\partial\lambda_C}\right),\end{aligned}\]
subject to the first class constrain $\varphi^A\vert \Psi \rangle=0$ where
\[\varphi^A=\epsilon^{ABC}\left(Z^B\frac{\partial}{\partial Z^C}+\overline{Z}^B\frac{\partial}{\partial \overline{Z}^C}+\lambda_B\frac{\partial}{\partial\lambda_C}\right).\]
 The associated supercharges are the following
\begin{equation}
\begin{aligned}
Q&=\sqrt{2}\frac{\partial}{\partial Z^A}\frac{\partial}{\partial\lambda_A}-\epsilon_{ABC}Z^A\overline{Z}^B\lambda^C\\
Q^{\dag}&=-\sqrt{2}\frac{\partial}{\partial \overline{Z}^A}\lambda^A+\epsilon_{ABC}Z^A\overline{Z}^B\frac{\partial}{\partial\lambda^C}
\end{aligned}
\end{equation} 
and the superalgebra satisfies the conditions
\[
\{Q,Q\}=2\sqrt{2}\overline{Z}^A\varphi_A,\quad
\{Q^{\dagger},Q^{\dagger}\}=2\sqrt{2}Z^A\varphi_A \quad \text{and} \quad
\{Q,Q^{\dagger}\}=2H.
\]

\section{Existence and uniqueness of a solution in a compact domain}

We now consider the existence and uniqueness of the ground state wave function for a system that is restricted by a first class constraint given by the $SU(2)$ gauge condition. We proceed in a similar fashion as in \cite{octonions}, but the presence of a constraint implies an added difficulty to the wavefunction analysis.  Let us firstly show that there is no need to solve the constraint in this type of matrix models explicitly. This simplifies the arguments significantly.  We use the representation of the wave function in terms of an anticommuting grassman coordinate (Lemma~\ref{lema1}) or its representation in the Fock space (Lemma~\ref{lema2} and Theorem~\ref{teorema1}) when convenient.

The constraint $\varphi^A\Psi=0$ defines a closed subspace of the Sobolev Hilbert space $\mathcal{H}^1(\Omega)$. Denote by $X$ the closure of this subspace in the norm of $L_2(\Omega)$. For the $SU(2)$ regularized supermembrane in four dimensions (RSM) we are interested in the following homogeneous problem. Given $g\in \mathcal{H}^2(\overline{\Omega})\cap X$, find $\Phi\in \mathcal{H}^2(\Omega)$ such that 
\begin{equation} \label{I}
\begin{cases}
(-\nabla^2+ V)\Phi=0 & \textrm{in } \Omega\\
\Phi=g\quad & \textrm{on } \partial\Omega\\
\Phi\in X.
\end{cases}
\end{equation}
Here $V$ is the potential of the hamiltonian $H$ of the RSM. 
We call $\Phi$ the ground state wavefunction of the hamiltonian in $\Omega$ since it corresponds to the restriction to $\Omega$ of the ground state wavefunction of the hamiltonian in $R^{D(N^2-1)}$, with $D=2$ and $N=2$. Besides, $\Phi$ minimizes the Dirichlet form, associated to the hamiltonian of RSM among the states which satisfy the constraint and the boundary condition.
\newline
Let $D(\Lambda,\Phi)$, with $\Lambda\in \mathcal{H}^1(\Omega)$ and $\Phi\in \mathcal{H}^1(\Omega)$, be the Dirichlet form associated to the operator $-\nabla^2+V$. It is defined by
\begin{equation}
D(\Lambda,\Phi)=(\nabla \Lambda,\nabla\Phi)+ (\Lambda,V\Phi)
\end{equation}
where $(\cdot,\cdot)$ denotes the internal product in $L_2(\Omega)$.
In particular if $\chi\in C_0^{\infty}(\Omega)$ we have
\begin{equation}\label{chi}
D(\chi,\chi)=(\chi,(-\nabla^2+V)\chi)\ge 0
\end{equation}
due to the supersymmetric structure of the mass operator. It then follows
\begin{equation}\label{varphi} D(\varphi,\varphi)\ge 0\end{equation}
for all $\varphi\in \mathcal{H}_0^1(\Omega).$\newline

Let $\Phi\in \mathcal{H}^1(\Omega)\cap X$, $\phi=g$ on $\partial\Omega$ be a solution of the Dirichlet problem 
\begin{equation}\label{v-p}D(\varphi,\Phi)=0\end{equation}
for all $\varphi\in \mathcal{H}_0^1(\Omega).$ 
 Then for any $\Lambda\in \mathcal{H}^1(\Omega)\cap X$, $\Lambda=g$ on $\partial\Omega$ we obtain
$$\Lambda-\Phi=\varphi,$$  where $\varphi\in \mathcal{H}_0^1(\Omega)\cap X.$
It follows that 
$$D(\Lambda,\Lambda)=D(\Phi,\Phi)+D(\Phi,\varphi)+D(\varphi,\Phi)+D(\varphi,\varphi),$$
we now use (\ref{varphi}) and (\ref{v-p}) to get
\begin{equation}
D(\Lambda,\Lambda)\ge D(\Phi,\Phi).
\end{equation}
Consequently $D(\Phi,\Phi)$ is the minimum of the values of the Dirichlet form evaluated on the states $\Lambda\in \mathcal{H}^1(\Omega)\cap X$, $\Lambda=g$ on $\partial\Omega$. This is analogous to the Dirichlet principle in  Electrostatics. We are going to show that there exists a unique $\Phi$ solution to the Dirichlet problem (\ref{v-p}), moreover the solution $\Phi\in \mathcal{H}^2(\Omega)\cap X$.  We may then integrate by parts in (\ref{v-p}) to obtain a unique solution to (\ref{I}). The minimum of the Dirichlet form is then obtained at the solution of problem (\ref{I}).\newline 
Let $f:=(\nabla^2-V)g$.
The following inhomogeneous problem is a re-formulation of \eqref{I}.  Find 
$\Psi\in \mathcal{H}_0^1(\Omega)\cap \mathcal{H}^2(\Omega)$
 such that 
\begin{equation} \label{II}
\begin{cases}
(-\nabla^2+ V)\Psi=f & \textrm{in } \Omega\\
\Psi=0 & \textrm{on } \partial\Omega\\
\Psi\in X.
\end{cases}
\end{equation}
If $\Psi$ is a solution of \eqref{II}, then $\Phi=\Psi+g$ is a solution of \eqref{I}. \newline

In the following we will take $\Omega$ to be a ball of radius $R\ne0$. If 
$\Psi\in \mathcal{H}_0^1(\Omega)\cap \mathcal{H}^2(\Omega)$ then $\varphi^A\Psi\in \mathcal{H}_0^1(\Omega)$. Geometrically it means that $\Omega$ remains invariant under the symmetry generated by the first class constraint of the theory.
 The main result of this work is given by the following theorem:
\begin{theorem} \label{teorema1} 
Let $g\in \mathcal{H}^2(\Omega)\cap X$. There exists a unique solution for the problem~ \eqref{II} which lies in $\mathcal{H}_0^1(\Omega)\cap \mathcal{H}^2(\Omega)\cap X$. Consequently, there exists a unique solution $\Phi\in \mathcal{H}^2(\Omega)\cap X$ to the problem~\eqref{I}.
\end{theorem}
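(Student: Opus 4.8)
The plan is to recast \eqref{II} as the inversion of a non-negative self-adjoint operator with discrete spectrum, and to reduce the whole statement to the single assertion that this operator has trivial kernel. Since $\Omega$ is a ball, the coefficients of $V=V_B+V_F$ are bounded on $\overline{\Omega}$ (here $V_B$ is quartic and $V_F$ linear in $(Z,\overline{Z})$, acting by a bounded matrix on the finite-dimensional fermionic factor), so $V$ is a bounded perturbation. A G{\aa}rding-type bound $D(\varphi,\varphi)+\lambda\|\varphi\|^2\ge c\|\varphi\|_{\mathcal{H}^1}^2$ with $c>0$ for $\lambda$ large then shows that the symmetric form $D$ with form domain $\mathcal{H}_0^1(\Omega)$ is closed and semibounded; let $H$ denote the associated self-adjoint realisation of $-\nabla^2+V$ with Dirichlet data. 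Because $\mathcal{H}_0^1(\Omega)\hookrightarrow L_2(\Omega)$ compactly, $H$ has compact resolvent and hence discrete spectrum, and by \eqref{varphi} we have $H\ge 0$. Thus everything rests on proving $0\notin\operatorname{spec}(H)$, i.e.\ that the only solution of $H\Psi=0$ with $\Psi\in\mathcal{H}_0^1(\Omega)$ is $\Psi=0$; the regularity and the constraint will then follow.

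Establishing this triviality of the kernel is the heart of the argument and the step I expect to cost the most work; it is exactly here that the Dirichlet condition and the supersymmetric algebra combine. Suppose $H\Psi=0$ with $\Psi\in\mathcal{H}_0^1(\Omega)$; elliptic regularity on the ball first upgrades $\Psi$ to $\mathcal{H}^2(\Omega)$, so the manipulations below are legitimate. Using $\{Q,Q^\dagger\}=2H$ and integrating by parts, the boundary contributions vanish precisely because $\Psi=0$ on $\partial\Omega$, so that $Q^\dagger$ acts as the genuine adjoint of $Q$ on this domain and
\[ 0=(\Psi,2H\Psi)=\|Q\Psi\|^2+\|Q^\dagger\Psi\|^2 . \]
Hence $Q\Psi=Q^\dagger\Psi=0$, and in particular $\mathcal{Q}\Psi=0$ for $\mathcal{Q}:=Q+Q^\dagger$. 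A direct symbol computation, using $\{\partial/\partial\lambda_A,\lambda^B\}=\delta_A^B$, shows that the principal symbol of $\mathcal{Q}$ squares to $2|\xi|^2$ times the identity, so $\mathcal{Q}$ is a first-order elliptic operator. Consequently, since $\Psi$ vanishes on $\partial\Omega$ (so all its tangential derivatives vanish there), the equation $\mathcal{Q}\Psi=0$ forces the normal derivative $\partial_n\Psi$ to vanish on $\partial\Omega$ as well. Thus $\Psi$ has vanishing Cauchy data on $\partial\Omega$ for the second-order elliptic operator $-\nabla^2+V$, whose principal part is the Laplacian on each component and whose lower-order coefficients are polynomial; the unique continuation property then yields $\Psi\equiv0$. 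This proves $\ker H=\{0\}$.

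With $0\notin\operatorname{spec}(H)$ and the spectrum discrete and non-negative, the lowest eigenvalue is strictly positive, so $H$ is boundedly invertible on $L_2(\Omega)$ and I set $\Psi:=H^{-1}f$, where $f=(\nabla^2-V)g\in L_2(\Omega)$. To place $\Psi$ in $X$ I use that the constraint generators $\varphi^A$ commute with $H$ and are tangent to the spherical boundary---this is the content of the observation, recorded just before the theorem, that $\varphi^A\Psi\in\mathcal{H}_0^1(\Omega)$ whenever $\Psi\in\mathcal{H}_0^1(\Omega)\cap\mathcal{H}^2(\Omega)$---so $H$ is reduced by the closed subspace $X$ and $H^{-1}$ preserves it; since $\varphi^A f=(\nabla^2-V)\varphi^A g=0$ gives $f\in X$, we obtain $\Psi\in X$. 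Elliptic regularity applied to $-\nabla^2\Psi=f-V\Psi\in L_2(\Omega)$ with smooth $\partial\Omega$ gives $\Psi\in\mathcal{H}^2(\Omega)$, so $\Psi\in\mathcal{H}_0^1(\Omega)\cap\mathcal{H}^2(\Omega)\cap X$ is the unique solution of \eqref{II}. Finally $\Phi:=\Psi+g\in\mathcal{H}^2(\Omega)\cap X$ equals $g$ on $\partial\Omega$ and, since $f=-Hg$, satisfies $H\Phi=H\Psi+Hg=f-f=0$; it is therefore the unique solution of \eqref{I}.
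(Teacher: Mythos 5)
Your proof is correct, and its skeleton is the same as the paper's: reduce \eqref{I} to \eqref{II}; show the Dirichlet realisation of $-\nabla^2+V$ has trivial kernel by using the supersymmetric algebra to get $Q\Psi=Q^\dagger\Psi=0$, upgrading the Dirichlet condition to vanishing Cauchy data on $\partial\Omega$, and concluding by continuation; then handle the constraint via the commutation of $\varphi^A$ with $-\nabla^2+V$ together with the fact that $\varphi^A\Psi\in\mathcal{H}_0^1(\Omega)$ (invariance of the ball). Within that skeleton you make three substitutions worth noting. First, where the paper invokes Folland's Theorem~7.21 for the coercive form (Lemma~\ref{lema2}), you construct the self-adjoint operator directly and use compactness of $\mathcal{H}_0^1(\Omega)\hookrightarrow L_2(\Omega)$ to get discrete spectrum and bounded invertibility once the kernel is trivial; these are functionally equivalent, though your version packages existence and uniqueness in one stroke. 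Second, where the paper's Lemma~\ref{lema1} extracts $\Psi_{\rho^2}=0$ by the explicit component computation ($\overline{Z}^A\partial_{\lambda}\Psi_{\rho^2}=0$ and $Z^A\lambda^A\Psi_{\rho^2}=0$, combined via $Z^A\overline{Z}^A=R^2\neq0$), you observe that $Q+Q^\dagger$ is first-order elliptic and read off $\partial_n\Psi=0$ from the invertibility of its principal symbol at the conormal; this is the same computation in conceptual clothing, since invertibility of the symbol in the normal direction is precisely the paper's condition $R^2\neq0$, but your formulation makes clearer why the argument should generalize to $SU(N)$. Third, where the paper appeals to Cauchy--Kovalevskaya plus analyticity of $V$, you invoke unique continuation for second-order elliptic systems with diagonal Laplacian principal part and bounded lower-order (matrix) coefficients; this is arguably the more robust route, since Cauchy--Kovalevskaya is per se an existence statement and the vanishing conclusion really rests on Holmgren/Carleman-type uniqueness, which your appeal to UCP supplies directly without needing analyticity. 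One small caveat: your claim that $H$ is ``reduced by $X$'' deserves a word of justification --- either note that the $\varphi^A$ generate a compact group of unitaries commuting with $H$, so the projection onto the invariant subspace $X$ commutes with $H^{-1}$, or simply run the paper's weak-form computation showing $D(\chi,\varphi^A\Psi)=0$ for all $\chi\in\mathcal{H}_0^1(\Omega)$ and then use triviality of the kernel; the ingredients you cite suffice, but as written the reduction is asserted rather than proved.
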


The proof of this theorem relies on two auxiliary lemmas.

\begin{lemma} \label{lema1}
Let $Q$ and $Q^\dag$ be the supercharge operators associated to the RSM.
If $\Psi\in \mathcal{H}_0^1(\Omega)$ satisfies the conditions 
\begin{equation}    \label{kernel}
Q\Psi=Q^{\dag}\Psi=0 \qquad \text{in} \qquad \Omega,
\end{equation}
then $\Psi=0$ in $\overline{\Omega}$. \label{eql}\end{lemma}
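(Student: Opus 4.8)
The plan is to turn the two first-order conditions \eqref{kernel} into a sign-definite elliptic identity to which the Dirichlet condition can be applied. Since $Q$ and $Q^\dag$ are odd in the Grassmann variables they preserve the splitting of $\Psi$ into even and odd fermion number, so I would treat each parity separately and, within it, expand $\Psi$ in the fermionic Fock basis, writing $\Psi=\phi_0+\epsilon^{ABC}\phi^A\lambda^B\lambda^C$ in the even case. Sorting $Q\Psi=0$ and $Q^\dag\Psi=0$ by fermion number then produces a coupled first-order system for the bosonic components: schematically $\sqrt2\,\partial\times\phi=-W\phi_0$ and $W\cdot\phi=0$ from $Q\Psi=0$, and $\sqrt2\,\overline\partial\phi_0=W\times\phi$ and $\overline\partial\cdot\phi=0$ from $Q^\dag\Psi=0$, where $\phi=(\phi^1,\phi^2,\phi^3)$, $W:=Z\times\overline Z$ is a purely imaginary vector, and $\partial_A=\partial/\partial Z^A$, $\overline\partial_A=\partial/\partial\overline Z^A$ are the holomorphic and antiholomorphic gradients. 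Because $\Psi\in\mathcal{H}_0^1(\Omega)$ every component lies in $\mathcal{H}_0^1(\Omega)$, so all integrations by parts below are free of boundary terms.

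A first, cheap consequence is obtained by pairing the conditions with $\Psi$ and integrating: using $\{Q,Q^{\dagger}\}=2H$ one gets $\|Q\Psi\|^2+\|Q^\dag\Psi\|^2=2D(\Psi,\Psi)$, hence $D(\Psi,\Psi)=0$. This alone is not enough, since the fermionic part $V_F$ of the potential is not sign-definite and $D(\Psi,\Psi)=0$ is exactly the statement that $\Psi$ is a zero mode; the whole point of the lemma is that the boundary condition excludes such a mode. I would therefore exploit the first-order equations individually. Differentiating the fermion-one equations and substituting the remaining ones, while using the divergence-free relation $\overline\partial\cdot\phi=0$ to cancel the off-diagonal terms in a Bochner-type identity (and $W\cdot\phi=0$ to simplify $|W\times\phi|^2=|W|^2|\phi|^2$), should yield the clean pair
\[\|\nabla\phi_0\|^2=\tfrac12\int_\Omega|W|^2\,|\phi|^2,\qquad \|\nabla\phi\|^2=\tfrac12\int_\Omega|W|^2\,|\phi_0|^2.\]

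To finish I would assemble these into a single non-negative identity --- equivalently, show that the fibre norm $|\Psi|^2$ is subharmonic, $\nabla^2|\Psi|^2\ge0$, by writing $\nabla^2\Psi=V\Psi$ (from $H\Psi=0$) and controlling the indefinite term $\Re\langle V_F\Psi,\Psi\rangle$ through the pointwise consequences of \eqref{kernel}. Subharmonicity together with $\Psi|_{\partial\Omega}=0$ then gives $|\Psi|^2\le\max_{\partial\Omega}|\Psi|^2=0$ by the maximum principle, whence $\Psi\equiv0$ on $\overline\Omega$; elliptic regularity upgrades $\Psi$ to a smooth solution so that all the manipulations are justified. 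The main obstacle is precisely this control of $V_F$: the fermionic cross-couplings between adjacent fermion sectors are individually indefinite, and showing that they reorganise --- after using \emph{both} first-order conditions and the $SU(2)$ identities for $\epsilon_{ABC}$ --- into manifestly non-negative quantities is where supersymmetry, i.e.\ the exact balance between $V_B$ and $V_F$, is essential. Once that sign-definite identity is in hand, the Dirichlet condition closes the argument immediately.
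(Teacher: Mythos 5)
Your strategy is genuinely different from the paper's, and it has a gap at exactly the point you flag yourself: the sign-definite identity is conjectured, not established, and there is concrete reason to doubt it can be established. Writing $\nabla^2|\Psi|^2 = 2|\nabla\Psi|^2 + 2V_B|\Psi|^2 + 2\Re\langle V_F\Psi,\Psi\rangle$, the indefinite term is a cross-coupling of pointwise size $|Z|\,|\phi_0|\,|\phi|$, while along the valleys $Z^A = z\,n^A$ with $n$ real one has $W = Z\times\overline{Z} = 0$, so $V_B=\tfrac12|W|^2$ vanishes identically there and your first-order relations degenerate to $\overline\partial\phi_0=0$, $\partial\times\phi=0$, $W\cdot\phi=0$, $\overline\partial\cdot\phi=0$ --- conditions that neither constrain $|\phi_0|$, $|\phi|$ pointwise nor furnish any non-negative quantity proportional to $|Z|$ capable of absorbing the cross term. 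These flat directions reach deep into $\Omega$ once $R$ is large, so the claimed subharmonicity of $|\Psi|^2$ has no visible proof; and the integrated versions of your Bochner identities (your ``clean pair'') are merely the fermion-number decomposition of $D(\Psi,\Psi)=0$, which you correctly note is vacuous. A further warning sign: your outline uses nothing specific to this model, so if subharmonicity followed from the first-order system alone, the same algebra would exclude normalizable zero modes of the analogous $SU(N)$ models in $D=11$, where such a state is conjectured to exist.

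The paper's proof works by a different mechanism, which is worth internalizing: it is a unique-continuation argument from the boundary, not a positivity argument. From \eqref{kernel} and the superalgebra one gets $H\Psi=0$, hence by elliptic regularity $\Psi\in\mathcal{H}^2(\Omega)$ and the first-order conditions extend to $\partial\Omega$. There the Dirichlet condition kills both the multiplication parts of $Q$, $Q^\dagger$ and all tangential derivatives, so the surviving derivative terms reduce to $\overline{Z}^A\frac{\partial}{\partial\lambda_A}\Psi_{\rho^2}=0$ and $Z^A\lambda_A\Psi_{\rho^2}=0$ on $\partial\Omega$; contracting these gives $R^2\Psi_{\rho^2}=0$, so the normal derivative vanishes since $R\neq 0$. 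With vanishing Cauchy data and an analytic potential, the Cauchy--Kovalevskaya theorem gives $\Psi=0$ near $\partial\Omega$, and analyticity propagates this to all of $\overline{\Omega}$. Note that the radius enters essentially ($R^2\neq0$), and the two supercharge equations are exploited \emph{at the boundary} rather than through an energy identity --- both features invisible to your interior approach. If you wish to salvage your route, the missing ingredient is precisely a substitute for this boundary step; the interior identities alone only reproduce the zero-mode condition they start from.
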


\begin{proof}
Assume \eqref{kernel}.
According to the conditions of the supersymmetric algebra, we have 
\begin{equation}
H\Psi=0 \qquad \textrm{in}\quad \Omega.
\end{equation}
By elliptic regularity, it immediately follows that $\Psi\in \mathcal{H}_0^1(\Omega)\cap \mathcal{H}^2(\Omega)$. Moreover, \eqref{kernel}, which originally held true in $\Omega$, can be extended smoothly to the boundary, $\partial\Omega$. 

Define $\rho^2=Z^A\overline{Z}^A$ and $\Psi_{\rho^2}= \frac{\partial\Psi}{\partial\rho^2}$, the normal derivative at 
\[
\partial\Omega=\{Z^A, \overline{Z}^A: Z^A\overline{Z}^A= R^2\}.
\] 
Then using (\ref{kernel}) on $\partial\Omega$ we obtain
$$\frac{\partial}{\partial Z^A}\frac{\partial}{\partial \lambda_8^A}\Psi=0\quad \text{and} \quad\frac{\partial}{\partial\overline{Z}^A}\lambda_8^A\Psi=0\quad\textrm{on}\quad\partial\Omega.$$
Rewriting the latter in terms of $\Psi_{\rho^2}$, and using that all the tangential derivatives at $\partial\Omega$ are zero, gives
\begin{equation*}
\overline{Z}^A\frac{\partial}{\partial\lambda_8^A}\Psi_{\rho^2}=0 \quad \text{and} \quad Z^A\lambda_8^A\Psi_{\rho^2}=0\quad\textrm{on\quad}\partial\Omega.
\end{equation*}
From this we conclude that $Z^A\overline{Z}^A\Psi_{\rho^2}=0$ on $\partial\Omega$. That is $R^2\Psi_{\rho^2}=0$ for any $R^2\ne 0.$ Consequently $\Psi_{\rho^2}=0$ on $\partial\Omega$. 

We therefore have $\Psi=0$ and $\Psi_{\rho^2}=0$ on $\partial\Omega$. By virtue of the Cauchi-Kovalewski Theorem, it follows that $\Psi=0$ in a neighborhood of $\partial\Omega$. Since $V$ is analytic in $\Omega$,  in fact $\Psi=0$ in $\overline{\Omega}$. 
\end{proof}
\begin{lemma} \label{lema2}
Let $f\in L_2(\Omega)\cap X=X$, there always exists a solution $\Psi\in \mathcal{H}_0^1(\Omega)$ to the Dirichlet problem
\begin{equation}\label{12}
\mathcal{D}(\chi,\Psi)=(\chi,f) \quad\textrm{for all}\quad \chi\in \mathcal{H}_0^1(\Omega),\end{equation}
Remark: The regularity properties of the Dirichlet form ensure that $\Psi\in \mathcal{H}_0^1(\Omega)\cap\mathcal{H}^2(\Omega).$
\end{lemma}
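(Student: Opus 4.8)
The plan is to read \eqref{12} as the variational equation for a bounded, coercive sesquilinear form on $\mathcal{H}_0^1(\Omega)$ and then invoke the Lax--Milgram theorem. First I would check that $\mathcal{D}$ is bounded on $\mathcal{H}_0^1(\Omega)$: the gradient term is controlled by Cauchy--Schwarz, while the potential term $(\Lambda,V\Phi)$ is bounded because $V$ is analytic, hence continuous and bounded, on the compact set $\overline{\Omega}$. Thus $|\mathcal{D}(\Lambda,\Phi)|\le C\|\Lambda\|_{\mathcal{H}^1}\|\Phi\|_{\mathcal{H}^1}$, and the right-hand side $\chi\mapsto(\chi,f)$ is a bounded linear functional on $\mathcal{H}_0^1(\Omega)$ since $|(\chi,f)|\le\|f\|_{L_2}\|\chi\|_{\mathcal{H}^1}$.

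The real content lies in coercivity. Boundedness of $V$ together with the Poincar\'e inequality on the bounded domain $\Omega$ yields a G\aa{}rding-type estimate $\mathcal{D}(\Phi,\Phi)+\lambda\|\Phi\|_{L_2}^2\ge\alpha\|\Phi\|_{\mathcal{H}^1}^2$ for suitable constants $\lambda,\alpha>0$, but this alone does not control a possible zero mode. To remove the shift I would exploit the supersymmetric structure: since $\{Q,Q^{\dagger}\}=2H$, the form $\mathcal{D}$ is symmetric and nonnegative (this is precisely \eqref{varphi}) and closed on $\mathcal{H}_0^1(\Omega)$, so it is generated by a nonnegative self-adjoint realization of $-\nabla^2+V$. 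Because $\Omega$ is bounded the embedding $\mathcal{H}_0^1(\Omega)\hookrightarrow L_2(\Omega)$ is compact (Rellich--Kondrachov), whence this operator has compact resolvent and purely discrete spectrum $0\le\lambda_1\le\lambda_2\le\cdots\to\infty$.

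The crux is to show $\lambda_1>0$, i.e.\ that $0$ is not an eigenvalue. If it were, there would exist $\Psi\neq 0$ in $\mathcal{H}_0^1(\Omega)$ with $(\Psi,H\Psi)=0$; using $2H=QQ^{\dagger}+Q^{\dagger}Q$ and the vanishing of the boundary terms for Dirichlet data, this forces $\|Q\Psi\|^2+\|Q^{\dagger}\Psi\|^2=0$, hence $Q\Psi=Q^{\dagger}\Psi=0$, contradicting Lemma~\ref{lema1}. Therefore $\lambda_1>0$ and $\mathcal{D}(\Phi,\Phi)\ge\lambda_1\|\Phi\|_{L_2}^2$; combining this spectral gap with the G\aa{}rding estimate (via a convex combination) upgrades it to genuine $\mathcal{H}^1$-coercivity $\mathcal{D}(\Phi,\Phi)\ge\beta\|\Phi\|_{\mathcal{H}^1}^2$, and Lax--Milgram produces the desired $\Psi\in\mathcal{H}_0^1(\Omega)$. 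Equivalently one could retain the shift, apply Lax--Milgram to the coercive form $\mathcal{D}+\lambda(\cdot,\cdot)$ to obtain a compact solution operator, and reduce solvability for every $f$ to triviality of the kernel through the Fredholm alternative, again furnished by Lemma~\ref{lema1}. Finally, the remark follows from elliptic regularity up to the boundary: with smooth coefficients, smooth $\partial\Omega$, and $f\in L_2(\Omega)$, the weak solution lies in $\mathcal{H}^2(\Omega)$. I expect the spectral-gap step, excluding the zero mode where the degeneracy of the nonnegative form concentrates, to be the main obstacle, which is exactly why Lemma~\ref{lema1} is invoked.
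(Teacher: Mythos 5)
Your proposal is correct, and the ``equivalent'' variant you mention at the end is in fact the paper's own proof: the paper notes that $\mathcal{D}$ satisfies a G\aa rding-type coercivity estimate because $V$ is bounded below on the bounded domain $\Omega$, and then applies Theorem~7.21 of Folland --- precisely the Fredholm-alternative packaging you sketch --- which yields solvability of \eqref{12} provided $f$ is $L_2$-orthogonal to the kernel $K=\{\xi\in \mathcal{H}_0^1(\Omega): \mathcal{D}(\chi,\xi)=0 \text{ for all } \chi\in \mathcal{H}_0^1(\Omega)\}$; regularity upgrades any $\xi\in K$ to a classical solution of $(-\nabla^2+V)\xi=0$, the supersymmetric structure then forces $Q\xi=Q^{\dagger}\xi=0$ in $\Omega$, and Lemma~\ref{lema1} gives $\xi=0$, so $K=\{0\}$ and existence (plus the $\mathcal{H}^2$ regularity in the remark) follows. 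Your primary route --- nonnegativity of $\mathcal{D}$ from \eqref{varphi}, compact resolvent via Rellich--Kondrachov, exclusion of a zero eigenvalue by the same Lemma~\ref{lema1} mechanism, and then genuine $\mathcal{H}^1$-coercivity feeding Lax--Milgram --- is a legitimate alternative wrapper around the identical core step. What it buys is a quantitative spectral gap $\lambda_1>0$, hence a stability bound $\|\Psi\|_{\mathcal{H}^1}\le C\|f\|_{L_2}$ and uniqueness directly from strict coercivity, rather than from $K=\{0\}$ as in the paper. What it costs is the extra spectral-theoretic scaffolding (closedness of the form, discreteness of the spectrum) that Folland's theorem dispenses with; also, in the zero-mode exclusion you should make explicit, as the paper does in Lemma~\ref{lema1}, that elliptic regularity first places the putative eigenfunction in $\mathcal{H}_0^1(\Omega)\cap\mathcal{H}^2(\Omega)$, so that the integrations by parts behind $\|Q\Psi\|^2+\|Q^{\dagger}\Psi\|^2=2(\Psi,H\Psi)$ have vanishing boundary contributions. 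Both arguments hinge on exactly the same ingredient, the triviality of the kernel furnished by Lemma~\ref{lema1}, so your assessment of where the real difficulty sits is accurate.
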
 

\begin{proof}
For the hamiltonian of the RSM, $\mathcal{D}$ is coercive since $V$ is bounded from below in $\Omega$, see \cite{Folland}. We may then use the theorem (7.21) from \cite{Folland}, which states that $\Psi$ exists, provide the subspace \[K=\{\xi\in \mathcal{H}_0^1(\Omega): \mathcal{D}(\chi,\xi)=0\quad \textrm{for all}\quad \chi\in \mathcal{H}_0^1(\Omega)\}\] is orthogonal in $L_2(\Omega)$ to $f$.

From the regularity property of the Dirichlet form, we obtain $\xi\in \mathcal{H}_0^1(\Omega)\cap C^{\infty}(\Omega)$, hence $(-\nabla^2 +V)\xi=0$ in $\Omega$. Consequently $Q\xi=0$ and $Q^{\dag}\xi=0$ in $\Omega$. According to Lemma~\ref{lema1}, we conclude that $\xi=0$ in $\Omega$, hence $K=\{0\}$ is orthogonal to $f$. The regularity properties of elliptic operators ensure that $\Psi\in \mathcal{H}_0^1(\Omega)\cap \mathcal{H}^2(\Omega).$
\end{proof}

The proof of Theorem~\ref{teorema1} is completed as follows.
According to Lemma~\ref{lema2}, there exists a solution $\Psi\in \mathcal{H}_0^1(\Omega)\cap \mathcal{H}^2(\Omega)$ to the Dirichlet problem~\eqref{12}. By integration by parts 
\begin{equation}
(-\nabla^2+V)\Psi=(\nabla^2-V)g \quad\textrm{in}\quad \Omega.
\end{equation}
Given $\chi\in C_0^{\infty}(\Omega)$ we consider 
\begin{equation}
(\varphi^A \chi, (-\nabla^2+V)\Psi)=(\varphi^A\chi, (\nabla^2-V)g)
\end{equation}
 then, using that $(-\nabla^2+V)$ commutes with $\varphi^A$, we get 
\begin{equation}
( (-\nabla^2+V)\chi,\varphi^A\Psi)=((\nabla^2-V)\chi,\varphi^A g)=0.
\end{equation}
 We thus have
\begin{equation}
D(\chi,\varphi^A \Psi)=0
\end{equation}
for all $\chi\in C^{\infty}_0(\Omega)$ and hence, taking limits, for all $\chi\in \mathcal{H}^1_0(\Omega).$ Consequently, using that $\varphi^A\Psi\in \mathcal{H}_0^1(\Omega)$, we obtain $\varphi^A\Psi=0,$ that is 
$\Psi\in\mathcal{H}_0^{1}(\Omega)\cap\mathcal{H}^2(\Omega)\cap X$ is a solution to the problem~\eqref{I}. It is unique, since 
$K\subset\{0\}$.

\section{Discussion}
Let us compare our analysis, on a smooth bounded domain, with the analysis presented in \cite{su2} and \cite{queiroz} on unbounded regions. In \cite{su2} the ground state of the Yang-Mills quantum mechanics is considered, and upper and lower bounds for the minimum eigenvalue are obtained. The hamiltonian has a similar structure to the hamiltonian of the bosonic membrane, its potential has a quartic dependence on the configuration variables with valleys extending to infinity. The hamiltonian is bounded from below by a hamiltonian with a basin shaped potential which has a discrete spectrum with non-zero minimal eigenvalue \cite{simons, luscher}. 

Another approach for the same problem was considered in \cite{inertia} using the Molchanov mean value condition for the potential \cite{molchanov, mazya}. 
The discreteness of the spectrum is originated by the non-zero ground-state energy of the bosonic harmonic oscillator. This argument cannot be extended to supersymmetric matrix models because the supersymmetric harmonic oscillator has zero ground state energy.  In the supersymmetric matrix models describing the regularized supermembrane the spectrum is continuous from $[0,\infty)$ and the main problem is to analyze whether $0$ is an eigenvalue or not.

The boundary condition in \cite{su2} prescribes that the wave functions should decay at infinity. We believe that this may be too restrictive for the quantum mechanics of $SU(2)$ supermembrane in an unbounded domain. As emphasizes in \cite{dwhn} one has to consider the whole Hilbert space $L_2(R^D)$ which includes wave functions which are square integrable functions but diverge to infinity in some directions. This problem is still open, and it requires a more delicate analysis of the boundary conditions at infinity. For that reason the problem formulated on a bounded domain with a nontrivial boundary condition is of relevance. 

The authors of \cite{queiroz} constructed a matrix model of QCD incorporating non-trivial topological aspects of the theory. The hamiltonian matrix model has similar properties to the bosonic hamiltonian of the supermembrane with non-trivial central charges or non-trivial winding \cite{bgmr0, bgmr}. In the latter paper, the fermionic potential is a relatively bounded operator with respect to the bosonic hamiltonian, hence the spectrum of the supersymmetric matrix model has the same qualitative behaviour as its bosonic sector \cite{bgmr}. However, the main problem in this paper is to resolve the spectrum of the $D=11$ supermembrane with zero winding in which case one expects that the ground state should correspond to the $D=11$ supergravity multiplet.

The approach we have in mind for the extension of our analysis to the unbounded region $R^D$ consists in three steps. The first one is to solve  the 'internal' problem that is the existence and uniqueness of the ground state on a smooth bounded domain $\Omega$. The second step is to analyze the problem on the exterior region of $\Omega$. Finally we will consider the matching of the two solutions. So far we have solved the first step for the $SU(2)$ problem and expect to generalize these arguments for the $SU(N)$ regularization of the $D=11$ Supermembrane. The analysis of the exterior problem introduces a new aspect compared to the first step. It is the behaviour of the potential at infinity. The matching conditions, a well-developed topic in elliptic partial differential equations, will give the final answer concerning to the existence or not of the ground state wave function of the $D=11$ Supermembrane.  A different approach is to consider a sequence of balls of  increasing radius  and to show the convergence of the sequence of solutions under a suitable  a priori boundary condition function $g\in L_2(R^D)$.

\section{Conclusions}
We presented an $SU(2)$ gauge supersymmetric matrix model whose center of mass propagates freely in a 4D spacetime with its transversal oscillations restricted to a compact space subject to Dirichlet boundary conditions. We establish the existence and uniqueness of the massless ground state of the theory. Our analysis relies heavily on the property that the constraint is of first class, which is the standard case of gauge theories. Extension of this analysis to matrix models subject to second class constraints are worth of further study, and are beyond of the scope of the present paper. 

Our proofs simplify significantly, due to the fact that we do not require to solve explicitly the constraint. The approach we have illustrated here can be extended in order to determine the ground state wavefunction of other matrix models associated to supersymmetric gauge systems, such as AdS/CFT at finite temperatures or QED matrix models in compact spacetime.

\section{Aknowledgements}  MPGM would like to thank to Manuel Asorey for interesting comments and to the Theoretical Physics Department at U. Zaragoza, Spain, for kind invitation while part of this work was done. MPGM is supported by Mecesup ANT1398, Universidad de Antofagasta, (Chile). A.R. is partially supported by Projects Fondecyt
1121103 (Chile).

\end{document}